\newcommand{\midset}{{\rm mid}}
\newcommand{\lw}{{\rm lw}}
\newcommand{\pw}{{\rm pw}}
\newtheorem{lemma}{Lemma}
\newtheorem{theorem}{Theorem}
\title{A Note on Exponential-Time Algorithms for Linearwidth}
\author{Yasuaki Kobayashi}
\author{Yu Nakahata}
\affil{Kyoto University}
\date{}
\begin{document}

\maketitle

\begin{abstract}
    In this note, we give an algorithm that computes the linearwith of  input $n$-vertex graphs in time $O^*(2^n)$, which improves a trivial $O^*(2^m)$-time algorithm, where $n$ and $m$ the number of vertices and edges, respectively.
\end{abstract}

\section{Introduction}
{\em Width parameters} of graphs play indispensable roles in many graph algorithms. 
{\em Treewidth} and {\em Branchwidth} measure how graphs can be finely decomposed into tree-like structures, and if a graph has small treewidth (or branchwidth), many NP-hard problems can be
solved efficiently on this graph using underlying tree-like structures.

While treewidth and branchwidth are defined through tree-structured decompositions, {\em Pathwidth} and {\em linearwidth}, which are ``linear variants'' of these parameters, are defined through path-structured decompositions.
Many NP-hard graph problems can be solved efficiently on graphs having small those width parameters as well as treewidth and branchwidth.
In this note, we focus on linearwidth.
The linearwidth of graphs was mentioned for the first time in the lecture note given by Robin Thomas\footnote{ \url{https://people.math.gatech.edu/~thomas/tree.ps}}.
This parameter is known as a ``linear counterpart'' of branchwidth.
Thilikos pointed out some applications to several graph searching algorithms~\cite{Thilikos:algorithms:2000}.
As for algorithmic applications, linearwidth is frequently used in enumerating subgraphs with specific properties with the frontier-based methods on Zero-suppressed binary Decision Diagrams (ZDD) and is an important measure for the efficiency of those enumeration algorithms \cite{Kawahara:Frontier:2017,Knuth:TAOCP:2011,Sekine:Computing:1995}. 

The problems of computing the treewidth, branchwidth, pathwidth, and linearwidth of graphs are all NP-hard \cite{Arnborg:Complexity:1987,Seymour:Call:1994,Kashiwabara:NP-compeleteness:1979,Thilikos:algorithms:2000}.
From the viewpoint of exact exponential-time algorithms, there are several attempts beating trivial brute force search algorithms for computing these width parameters: There are exact algorithms for treewidth, branchwidth, pathwidth for graphs having $n$ vertices and $m$ edges that run in time $O(1.7549^n)$~\cite{Fomin:Exact:2008,Fomin:Treewidth:2012}, $\min\{O^*(2^m), O^*(2\sqrt{3})^n\}$~\cite{Fomin:Computing:2009,Oum:Computing:2009}, and $O(1.89^n)$~\cite{Kitsunai:Computing:2016}, respectively, where the $O^*$ notation suppresses the polynomial factor in the input size.
However, to the best of authors' knowledge, the best known exact algorithm for linearwidth is a Bellman-Held-Karp style dynamic programming algorithm, which runs in time $O^*(2^m)$.

In this note, we give an improved algorithm for linearwidth.

\begin{theorem}\label{thm:main}
    There is an algorithm that computes the linearwidth of an input graph $G$ in time $O^*(2^n)$, where $n$ is the number of vertices in $G$.
\end{theorem}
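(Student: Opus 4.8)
The plan is to reduce the computation of $\lw(G)$ to a purely vertex--based ordering problem on $G$ itself and then solve that problem with a Held--Karp style dynamic program whose states are the $2^n$ subsets of $V(G)$. To set up, for a set $S\subseteq E(G)$ write $V(S)$ for the set of vertices incident with an edge of $S$ and $U(S)$ for the set of vertices all of whose incident edges lie in $S$. A short computation shows that for a layout $e_1,\dots,e_m$ of $E(G)$ and $S_i=\{e_1,\dots,e_i\}$ the middle set between $e_i$ and $e_{i+1}$ is exactly $V(S_i)\setminus U(S_i)$; in particular its size depends only on the \emph{set} $S_i$, not on the order inside it. Hence $\lw(G)$ equals the minimum over maximal chains $\emptyset=S_0\subsetneq\cdots\subsetneq S_m=E(G)$ of $\max_i\bigl(|V(S_i)|-|U(S_i)|\bigr)$.

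The core of the argument is the identity
\[
  \lw(G)=\min_{\sigma}\ \max_{0\le i\le n} c\bigl(U_i^{\sigma}\bigr),\qquad
  c(U):=\bigl|\{x\notin U:\ N(x)\cap U\neq\emptyset\ \text{and}\ N(x)\setminus U\neq\emptyset\}\bigr|,
\]
where $\sigma=(v_1,\dots,v_n)$ ranges over all orderings of $V(G)$ and $U_i^{\sigma}=\{v_1,\dots,v_i\}$; intuitively $c(U)$ is the middle set reached once precisely the edges meeting $U$ have been processed. For the inequality ``$\ge$'' I would take an optimal layout, read off the order in which vertices become \emph{finished} (enter some $U(S_i)$), break ties to obtain a vertex ordering $\sigma$, and check that each prefix $U_i^{\sigma}$ has small $c$: if $x\notin U_i^{\sigma}$ has a finished neighbour $y$, then the edge $xy$ has already been processed, so $x\in V(S)\setminus U(S)$ at the corresponding point of the layout, and thus $c(U_i^{\sigma})$ is at most the layout's width. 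For ``$\le$'' I would, given $\sigma$, construct a layout of width at most $\max_i c(U_i^{\sigma})$; the idea is to process the edges ``cut by cut'' but to interleave the edges incident to consecutive vertices of $\sigma$, so that a vertex enters the middle set only shortly before it is finished. Concretely one maintains, after phase $i$, a processed set $S$ with $U(S)\supseteq U_i^{\sigma}$ and argues that the few extra edges handled in phase $i$ can be ordered so that the middle set always stays inside the union of the two boundary sets counted by $c(U_{i-1}^{\sigma})$ and $c(U_i^{\sigma})$.

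The minimum on the right of the identity is then computed by the recurrence $h(\emptyset)=0$ and $h(U)=\max\bigl(c(U),\ \min_{v\in U}h(U\setminus\{v\})\bigr)$, where $v$ plays the role of the last vertex of the prefix $U$, and $h(V(G))=\lw(G)$. There are $2^n$ subsets $U$, each value $c(U)$ is computable in polynomial time, and each recurrence step inspects $n$ predecessors, so the whole algorithm runs in $O^*(2^n)$ time, which is the theorem.

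I expect the main obstacle to be the ``$\le$'' direction of the characterization: the obvious layouts---processing all forward edges of one vertex at a time, or all backward edges at a time---can overshoot $\max_i c(U_i^{\sigma})$ by one, so one genuinely has to split a single vertex's edges across several phases and argue carefully that the middle set never grows beyond the two relevant boundary sets. A secondary, minor point is that vertices of degree at most $1$ never lie in any middle set; it is cleanest to delete them in a preprocessing step so that the ``finishing order'' extraction in the ``$\ge$'' direction has no exceptional cases.
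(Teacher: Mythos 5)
There is a genuine gap: the identity $\lw(G)=\min_{\sigma}\max_{0\le i\le n} c(U_i^{\sigma})$ on which your entire algorithm rests is false, and it fails precisely in the ``$\le$'' direction that you yourself flagged as the main obstacle. Consider the path on three vertices $a,b,c$ with edges $ab$ and $bc$, and the ordering $\sigma=(b,a,c)$. The edges meeting $U_1^{\sigma}=\{b\}$ are already all of $E$, so $c(\{b\})=|\midset(E)|=0$, and in fact $c(U_i^{\sigma})=0$ for every $i$; yet the linearwidth of this graph is $1$, since whichever of the two edges is laid out first leaves $b$ in the middle set. Your recurrence for $h$ would therefore return $0$ here (and likewise on any star $K_{1,r}$ under an ordering that places the center first). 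The structural reason is that $c(U)$ counts only vertices \emph{outside} $U$, whereas during the phase in which you finish a vertex $v_i$ its incident edges must be processed one at a time, and $v_i$ itself sits in the intermediate middle sets while belonging to neither of the boundary sets counted by $c(U_{i-1}^{\sigma})$ and $c(U_i^{\sigma})$. No interleaving of edges across phases can repair this: $\min_{\sigma}\max_i c(U_i^{\sigma})$ is merely a lower bound on $\lw(G)$ (your ``$\ge$'' argument is essentially sound) and can be strictly smaller, which is consistent with the fact that $\lw(G)$ can exceed $\pw(G)$ by one (Lemma~\ref{lem:pw-lw}); any correct vertex-ordering characterization would have to charge for $v_i$ itself during its own phase, which is considerably more delicate than what you sketch.

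For comparison, the paper does not attempt a vertex-ordering characterization at all. It keeps the Bellman--Held--Karp dynamic program over \emph{edge} subsets, but prunes it: using the submodularity of $d$ (Lemma~\ref{lem:submodular}), it shows (Lemma~\ref{lem:prune}) that appending to a $k$-extendable partial layout any edge $e$ with $d(F\cup\{e\})\le d(F)$ preserves $k$-extendability, so one may always immediately saturate $F$ with all edges of the subgraph induced by $V(F)$. The reachable DP states are then the closed edge sets, which are determined by their vertex sets and hence number at most $2^n$. This yields $O^*(2^n)$ without ever needing an exact vertex-based formula for $\lw(G)$. If you wish to pursue your plan, you would first have to find and prove a correct per-phase cost; the closure-and-pruning route sidesteps that difficulty entirely.
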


Fomin and Thilikos~\cite{Fomin:3-approximation:2006} claimed that the pathwidth and linearwidth of every graph differ by at most one.
More precisely, they claimed that the linearwidth of a graph is at least its pathwidth and at most its pathwidth plus one.
However, there are some exceptions on this relation: the linearwidth of the complete graph of two vertices is zero and its pathwidth is one.
In fact, these extreme cases are only exceptions for their inequalities, and we give a corrected proof in Section~\ref{sec:preli}.
As a straightforward application of this relation, together with the exact exponential-time algorithm for pathwidth~\cite{Kitsunai:Computing:2016}, we have the following exponential-time approximation.

\begin{theorem}\label{thm:approx}
    There is an $O(1.89^n)$-time approximation algorithm for linearwidth with additive error of at most one.
\end{theorem}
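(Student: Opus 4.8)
The plan is to derive the approximation algorithm almost immediately from the exact pathwidth algorithm of Kitsunai et al.~\cite{Kitsunai:Computing:2016}, which runs in time $O(1.89^n)$, together with the corrected pathwidth--linearwidth relation that I will prove in Section~\ref{sec:preli}. Given an input graph $G$, I would first compute $\pw(G)$ exactly using that algorithm, and then output the value $\pw(G)$ itself as the estimate for $\lw(G)$.

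To see that this estimate has additive error at most one, I will invoke the relation from Section~\ref{sec:preli}: for every graph $G$ one has $\pw(G) \le \lw(G) \le \pw(G) + 1$, with the single exception of $K_2$, where $\lw(K_2) = 0$ and $\pw(K_2) = 1$. In the generic case the output $\pw(G)$ clearly lies within one of $\lw(G)$; in the exceptional case, which is recognized in polynomial time, the output $1$ still differs from $\lw(K_2) = 0$ by exactly one. Hence the output differs from $\lw(G)$ by at most one in all cases, and since the running time is dominated by the call to the pathwidth algorithm, the overall running time is $O(1.89^n)$.

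The argument has no real obstacle once the inequality of Section~\ref{sec:preli} is in hand; the only subtlety worth noting is that on the exceptional instance $K_2$ the two-sided bound degenerates to a one-sided bound, yet the absolute gap between $\lw(G)$ and $\pw(G)$ stays at most one, so outputting $\pw(G)$ unconditionally is already correct and no case distinction in the algorithm is actually needed. I do not expect this route to yield an exact algorithm, since the pathwidth computation does not by itself reveal whether $\lw(G)$ equals $\pw(G)$ or $\pw(G)+1$.
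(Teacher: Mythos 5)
Your approach is the same as the paper's: run the exact pathwidth algorithm of Kitsunai et al.\ \cite{Kitsunai:Computing:2016} and use the pathwidth--linearwidth relation of Section~\ref{sec:preli} to get additive error one. The conclusion is right, but one step of your justification is inaccurate: $K_2$ is \emph{not} the single exception to $\pw(G) \le \lw(G) \le \pw(G)+1$. Lemma~\ref{lem:pw-lw} is proved only under the hypothesis $\lw(G) \ge 1$, and the graphs it excludes are exactly those in which every connected component has at most two vertices; any such graph with at least one edge (e.g.\ a disjoint union of two edges) has $\lw(G)=0$ and $\pw(G)=1$, so it violates $\pw(G)\le\lw(G)$ just as $K_2$ does. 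Your argument therefore needs one more line: every graph with $\lw(G)=0$ has $\pw(G)\le 1$, hence $|\pw(G)-\lw(G)|\le 1$ holds in all the exceptional cases as well, and outputting $\pw(G)$ unconditionally is indeed within additive error one. With that observation your proof closes. A further difference worth noting: the paper's argument also produces a \emph{layout} witnessing the bound, by first detecting the $\lw(G)=0$ case in polynomial time and otherwise converting an optimal path decomposition into a layout of width at most $\pw(G)+1 \le \lw(G)+1$ via the polynomial-time procedure of \cite{Fomin:3-approximation:2006}; if ``approximation algorithm'' is read as requiring a witness rather than only a value, you would need to add this step.
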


\section{Preliminaries}\label{sec:preli}
Let $G = (V, E)$ be a graph.
We use $n$ and $m$ to denote the number of vertices and edges of $G$, respectively.
For a subset $F$ of $E$, $\midset(F)$ is the set of vertices having an incidental edge in both $F$ and $E \setminus F$, that is, $\midset(F) = V(F) \cap V(E \setminus F)$, where $V(F)$ is the set of end vertices of edges in $F$. 
We let $d(F) = |\midset(F)|$.

\begin{lemma}\label{lem:submodular}
    The function $d$ is submodular on $E$, that is, for $X, Y \subseteq E$ with $X \subseteq Y$,
    and $e \in E \setminus Y$,
    \[
        d(X \cup \{e\}) - d(X) \ge d(Y \cup \{e\}) - d(Y).
    \]
\end{lemma}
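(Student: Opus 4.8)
The plan is to reduce the statement to a one-vertex computation by decomposing $d$ into a sum of local $\{0,1\}$-valued functions, one per vertex, and then to check the inequality separately for each term. For a vertex $w \in V$ write $E_w$ for the set of edges of $G$ incident with $w$, and define $f_w \colon 2^E \to \{0,1\}$ by setting $f_w(F) = 1$ precisely when $w \in \midset(F)$. Since $w \in \midset(F)$ holds exactly when $F \cap E_w$ is neither empty nor all of $E_w$, the value $f_w(F)$ depends only on the two numbers $|F \cap E_w|$ and $|E_w|$. Summing over all vertices gives $d(F) = \sum_{w \in V} f_w(F)$, and because the inequality in the statement is linear in $d$, it will be enough to establish it with $f_w$ in place of $d$ for every $w$ and then add the resulting inequalities.

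So I would fix $w$, a chain $X \subseteq Y \subseteq E$, and an edge $e \in E \setminus Y$. If $w$ is not an endpoint of $e$, then $e \notin E_w$, the function $f_w$ is unchanged by inserting $e$, and both sides of the inequality vanish; so assume $e \in E_w$, put $k = |E_w| \ge 1$, $a = |X \cap E_w|$, $b = |Y \cap E_w|$, and note $0 \le a \le b \le k - 1$ since $X \subseteq Y$ and neither set contains $e$. If $k = 1$ then $a = b = 0$ and the inequality is $0 \ge 0$; so assume $k \ge 2$. A direct evaluation then gives
\[
    f_w(X \cup \{e\}) - f_w(X) = \bigl(1 - [\,a = k - 1\,]\bigr) - \min(a, 1),
\]
and similarly with $b$ in place of $a$. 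Across the three regimes $a = 0$, $1 \le a \le k - 2$, and $a = k - 1$ this increment equals $+1$, $0$, and $-1$ respectively, hence is non-increasing in $a$; since $a \le b$, the increment at $X$ is at least that at $Y$, which is exactly the inequality for $f_w$.

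I expect the only delicate points to be these boundary degeneracies — the degree-one case $k = 1$, where inserting $e$ can never create a middle-set vertex, and, if the graph model permits loops or parallel edges, a check that the increment formula above is still correct — none of which disturbs the monotonicity in $a$. Otherwise there is no real obstacle: the argument is just the standard proof that the boundary (connectivity) function of a graph is submodular, recast in the diminishing-returns form in which the lemma is stated. An equivalent route would be to establish the lattice inequality $d(A) + d(B) \ge d(A \cup B) + d(A \cap B)$ directly, again by summing the corresponding inequality over the per-vertex functions $f_w$.
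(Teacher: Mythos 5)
Your proposal is correct and is essentially the paper's own argument: both proofs analyze the contribution of each endpoint of $e$ to the increment $d(F \cup \{e\}) - d(F)$, split into the same three regimes (no incident edges in $F$, all but $e$ incident edges in $F$, and the intermediate case), and conclude by monotonicity of the increment in the number of incident edges already placed. Your explicit decomposition $d = \sum_w f_w$ and the closed-form increment $(1 - [\,a = k-1\,]) - \min(a,1)$ merely make the paper's case analysis more systematic; the degree-one check matches the paper's opening remark that such a vertex never enters any middle set.
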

\begin{proof}
    Let $u$ be one of the end vertices of $e$. 
    If the degree of $u$ is exactly one, then it does not contribute $d(F)$ for any $F \subseteq E$. 
    Thus, we assume that the degree of $u$ is at least two.
    For $F \subseteq E$, we let $d_F(u)$ be the number of edges in $F$ incident to $u$.
    Observe that $u$ contributes one to $d(F \cup \{e\}) - d(F)$ if $d_F(u) = 0$, minus one if $d_F(u) = d_E(u) - 1$, and zero otherwise.
    If $d_Y(u) = 0$, then $d_X(u) = 0$ and hence $u$ contributes one to both sides.
    If $d_Y(u) = d_E(u) - 1$, then $u$ contributes minus one to the right-hand side and at least minus one to the left-hand side.
    Finally, if $0 < d_Y(u) < d_E(u) - 1$, then $u$ contributes zero to the right-hand side and at least zero as $d_X(u) \le d_Y(u) < d_E(u) - 1$.
    By symmetrically considering the other end vertex of $e$, the lemma follows.
\end{proof}


For a non-negative integer $k$, let $[k] = \{1, 2, \ldots, k\}$.
A bijection $\pi: [m] \to E$ is called a {\em layout} of $G$.
For $1 \le i \le m$, we denote by $\pi_{\le i}$ the set of edges appeared in the first $i$ edges of $\pi$, i.e., $\pi_{\le i} = \{\pi(j) : 1 \le j \le i\}$.
The {\em width} of $\pi$ is defined as $\max_{1 \le i \le m}d(\pi_{\le i})$ and the {\em linear-width} of $G$ is the minimum integer $k$ such that $G$ has a layout of width at most $k$.

A sequence $\mathcal X = (X_1, X_2, \ldots, X_t)$ of subsets of $V$ is called a {\em path decomposition} of $G$ if the following conditions hold:
\begin{itemize}
    \item $\bigcup_{1\le i \le t}X_i = V$;
    \item for each $e \in E$, there is an index $i$ with $e \subseteq X_i$; and
    \item for each $v \in V$, the indices of bags containing $v$ are consecutive on $\mathcal X$.
\end{itemize}
The {\em width} of $\mathcal X$ is defined as $\max_{1\le i \le t}|X_i| - 1$ and the {\em pathwidth} of $G$ is the minimum integer $k$ such that $G$ has a path decomposition of width at most $k$.

The linear-width and pathwidth of a graph are closely related to each other. Fomin and Thilikos \cite{Fomin:3-approximation:2006} showed that the linear-width and pathwidth differ by at most one for every graph.
More precisely, they claimed that $\pw(G) \le \lw(G) \le \pw(G) + 1$ for every graph $G$.
Although these bounds are almost true, there are some exceptions for this relation: For example, $\pw(K_2) = 1$ and $\lw(K_2) = 0$, where $K_2$ is the complete graph of two vertices.
The following lemma correctly handles such exceptions and its proof is essentially the same as one in \cite{Fomin:3-approximation:2006}.

\begin{lemma}\label{lem:pw-lw}
    For every graph $G$ with $\lw(G) \ge 1$,
    \[
        \pw(G) \le \lw(G) \le \pw(G) + 1.
    \]
\end{lemma}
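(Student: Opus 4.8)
The plan is to prove the two inequalities separately. The bound $\lw(G)\le\pw(G)+1$ holds for \emph{every} graph; only the reverse bound $\pw(G)\le\lw(G)$ needs the hypothesis $\lw(G)\ge 1$, the obstruction being connected components isomorphic to $K_2$ (for which $\lw=0$ but $\pw=1$).

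For $\lw(G)\le\pw(G)+1$, fix a path decomposition $\mathcal X=(X_1,\dots,X_t)$ of width $p=\pw(G)$, let $\phi(e)$ be the least index $i$ with $e\subseteq X_i$, and take any layout $\pi=(e_1,\dots,e_m)$ with $\phi(e_1)\le\dots\le\phi(e_m)$. Given a prefix $\pi_{\le j}$ with $j<m$, set $i=\phi(e_{j+1})$; then every edge of $\pi_{\le j}$ has $\phi$-value at most $i$ and every edge outside it has $\phi$-value at least $i$. Hence each $v\in\midset(\pi_{\le j})$ lies in some bag of index $\le i$ (via its incident edge in $\pi_{\le j}$) and in some bag of index $\ge i$ (via its incident edge outside $\pi_{\le j}$), so $v\in X_i$ by the interval property of $\mathcal X$; thus $\midset(\pi_{\le j})\subseteq X_i$ and $d(\pi_{\le j})\le|X_i|\le p+1$. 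Maximising over $j$ (the case $j=m$ being trivial) gives a layout of width at most $p+1$, so $\lw(G)\le p+1$.

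For $\pw(G)\le\lw(G)$, take an optimal layout $\pi=(e_1,\dots,e_m)$ of width $k=\lw(G)\ge 1$. An edge both of whose endpoints have degree one never belongs to $\midset(F)$ for any $F\subseteq E$, so moving all such ($K_2$-component) edges to the tail of $\pi$ changes no prefix's $d$-value; assume $\pi$ has this form. Put $X_i=\midset(\pi_{\le i-1})\cup e_i$ for $1\le i\le m$ (with $\pi_{\le 0}=\emptyset$), and append a singleton bag $\{w\}$ for each isolated vertex $w$ of $G$. This is a path decomposition: $e_i\subseteq X_i$; if the edges incident to a vertex $v$ occupy positions with minimum $a$ and maximum $b$, then $v\in X_i$ exactly for $a\le i\le b$ — when $a<i<b$ this is because $v$ then has an incident edge in $\pi_{\le i-1}$ and one outside, so $v\in\midset(\pi_{\le i-1})$ — which is an interval; and an isolated vertex sits only in its own bag.

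The crux is the bound $|X_i|\le k+1$; write $e_i=\{u,v\}$. If $e_i$ is a $K_2$-component edge it lies at the tail, so $\pi_{\le i-1}$ contains every other component of $G$ in full, $\midset(\pi_{\le i-1})=\emptyset$, and $|X_i|=2\le k+1$ — this step (and the bound $|X_1|=2\le k+1$) is where $k\ge 1$ is used. Otherwise some endpoint of $e_i$, say $u$, has degree at least two: if $u,v\in\midset(\pi_{\le i-1})$ then $|X_i|=d(\pi_{\le i-1})\le k$; if exactly one of them does, then $|X_i|\le d(\pi_{\le i-1})+1\le k+1$; and if neither does, then since $u,v$ are incident to $e_i\notin\pi_{\le i-1}$ neither has \emph{any} incident edge in $\pi_{\le i-1}$, so passing from $\pi_{\le i-1}$ to $\pi_{\le i}$ adds $u$ to the frontier and removes nothing, whence $d(\pi_{\le i-1})\le d(\pi_{\le i})-1\le k-1$ and $|X_i|=d(\pi_{\le i-1})+2\le k+1$. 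With the singleton bags of size $1$, the constructed path decomposition has width at most $k$, so $\pw(G)\le\lw(G)$. I expect this size estimate to be the main obstacle: both natural candidates $\midset(\pi_{\le i-1})\cup e_i$ and $\midset(\pi_{\le i})\cup e_i$ appear to lose a unit, so one has to observe that these two frontiers differ only inside $e_i$ (the same local analysis underlying Lemma~\ref{lem:submodular}) and then pin down the genuinely bad configuration — a fresh edge whose insertion does not enlarge the frontier — as being exactly a $K_2$-component, which is precisely why the uncorrected inequality fails and the hypothesis $\lw(G)\ge 1$ is needed.
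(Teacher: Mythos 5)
Your proof is correct and, for the nontrivial direction $\pw(G)\le\lw(G)$, uses essentially the same construction as the paper: the bags $X_i=\midset(\pi_{\le i-1})\cup \pi(i)$ with the same case analysis on whether $\midset(\pi_{\le i-1})$ meets $\pi(i)$ (you handle $K_2$-components and isolated vertices in place, where the paper reduces to the connected case). The only substantive addition is that you also prove $\lw(G)\le\pw(G)+1$ by ordering edges according to the first bag containing them, whereas the paper simply cites Fomin and Thilikos for that inequality.
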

\begin{proof}
    The bound $\lw(G) \le \pw(G) + 1$ was given in \cite{Fomin:3-approximation:2006}.
    Therefore, we prove here that $\pw(G) \le \lw(G)$.
    
    If $G$ has two or more connected components, then $\pw(G)$ and $\lw(G)$ correspond to the maximum pathwidth and linearwidth of its components, respectively.
    Thus, we assume that $G$ is connected.
    Moreover, it is easy to see that $\lw(G) = 0$ if and only if $G$ is either an isolated vertex or a single edge.
    Therefore, we assume otherwise.
    
    Let $\pi$ be a layout of $G$ of width $\lw(G)$.
    Now, we construct a path decomposition $\mathcal X = (X_1, X_2, \ldots, X_m)$ as $X_i = \midset(\pi_{i - 1}) \cup \pi(i)$.
    In the following, we prove that $\mathcal X$ is a path decomposition of $G$ and its width is at most $\lw(G)$.
    By the definition of layout, the first and second conditions of path decompositions clearly hold.
    Fix a vertex $v \in V$.
    If $i$ is the smallest index with $v \in \pi(i)$, then we have $v \in X_i$.
    Let $j$ be the largest index with $v \in \pi(j)$.
    Then, $v \in \midset(\pi_{\le k})$ for every $i \le k \le j$ and hence the third condition of path decompositions holds.
    
    To bound the width of path decomposition $\mathcal X$, consider a set $X_i$ in $\mathcal X$.
    As $d(\pi_{\le i}) \le \lw(G)$, $\midset(\pi_{\le i-1}) \cap \pi(i) \neq \emptyset$ implies $|X_i| \le \lw(G) + 1$.
    Therefore, we suppose otherwise $\midset(\pi_{\le i-1}) \cap \pi(i) = \emptyset$.
    Since $G$ is connected and is not the complete graph with at most two vertices, at least one of two end vertices of $\pi(i)$ has degree more than one.
    This means that $d(\pi_{\le i-1}) < d(\pi_{\le i}) \le \lw(G)$.
    Therefore, we have $|X_i| \le d(\pi_{\le i-1}) + |\pi(i)| \le \lw(G) + 1$.
\end{proof}

We can compute the pathwidth of a graph in time $O(1.89^n)$~\cite{Kitsunai:Computing:2016} and if we have a path decomposition of width at most $k$, we can compute a layout of $G$ of width at most $k + 1$ in polynomial time~\cite{Fomin:3-approximation:2006}.
It is easy to observe that $\lw(G) = 0$ if and only if each connected component of $G$ has at most two vertices.
As a consequence of these results, we have Theorem~\ref{thm:approx}.

\section{An $O^*(2^n)$-time exact algorithm}
Our proposed algorithm is based on Bellman-Held-Karp style dynamic programming for ordering problems~\cite{Bodlaender:Note:2012}.
Let $F \subseteq E$ and let $k$ be a non-negative integer.
A {\em partial layout} of $F$ is a bijection $\sigma: [|F|] \to F$, and we say that $\sigma$ has width at most $k$ if for every $1\le i \le |F|$, it holds that $d(\sigma_{\le i}) \le k$, where $\sigma_{\le i}$ is defined analogously.
For each $F \subseteq E$, our algorithm determines if there is a partial layout of $F$ of width at most $k$.
Suppose $F$ is not empty.
It is easy to observe that $F$ has a partial layout of width at most $k$ if and only if $d(F) \le k$ and $F \setminus \{e\}$ has a partial layout of width at most $k$ for some $e \in F$.
This allows us to compute the linear-width of $F$ in time $O^*(2^m)$ using dynamic programming over all subsets of $E$.

To obtain a better running time, we use a pruned dynamic programming algorithm based on the $O^*(2^m)$-time algorithm described above.
Let $\sigma$ be a partial layout of $F$ of width at most $k$.
We say that $\sigma$ is {\em $k$-extendable} if $G$ has a layout $\pi$ of width at most $k$ such that $\sigma(i) = \pi(i)$ for every $1 \le i \le |F|$.
The following lemma is the key to our pruned dynamic programming.

\begin{lemma}\label{lem:prune}
    Let $\sigma$ be a partial layout of $F$ of width at most $k$.
    Suppose there is $e \in E \setminus F$ such that $d(F) \ge d(F \cup \{e\})$.
    Then, $\sigma$ is $k$-extendable if and only if the partial layout $\sigma'$ of $F \cup \{e\}$ obtained from $\sigma$ by appending $e$ at the end of $\sigma$ is $k$-extendable.
\end{lemma}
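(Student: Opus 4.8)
The plan is to prove the two implications separately, with the backward one essentially trivial and the forward one carrying all the content. First I would record that $\sigma'$ is genuinely a partial layout of $F \cup \{e\}$ of width at most $k$: every prefix of $\sigma'$ of length at most $|F|$ coincides with the corresponding prefix of $\sigma$, and the prefix of length $|F|+1$ equals $F \cup \{e\}$, for which $d(F \cup \{e\}) \le d(F) \le k$ by the hypothesis and by the fact that $\sigma$ has width at most $k$. For the backward direction, if $\sigma'$ is $k$-extendable via a layout $\pi$ of width at most $k$ with $\pi(i) = \sigma'(i)$ for $1 \le i \le |F|+1$, then this same $\pi$ satisfies $\pi(i) = \sigma(i)$ for $1 \le i \le |F|$, so $\sigma$ is $k$-extendable.

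For the forward direction, suppose $\sigma$ is $k$-extendable, witnessed by a layout $\pi$ of width at most $k$ with $\pi(i) = \sigma(i)$ for $1 \le i \le |F|$. Let $j$ be the index with $\pi(j) = e$; since $e \notin F = \pi_{\le |F|}$ we have $j \ge |F|+1$. I would then form a new layout $\pi'$ by moving $e$ to position $|F|+1$: set $\pi'(i) = \pi(i)$ for $i \le |F|$, $\pi'(|F|+1) = e$, $\pi'(i) = \pi(i-1)$ for $|F|+2 \le i \le j$, and $\pi'(i) = \pi(i)$ for $i > j$. By construction $\pi'$ agrees with $\sigma$ on the first $|F|$ positions and has $\pi'(|F|+1) = e$, so $\pi'$ extends $\sigma'$, and it remains only to check that $\pi'$ has width at most $k$.

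The width check breaks into three ranges of prefixes. For $i \le |F|$ we have $\pi'_{\le i} = \pi_{\le i}$, and for $i \ge j$ one computes $\pi'_{\le i} = \pi_{\le i}$ as well (at $i = j$ both equal $\pi_{\le j-1} \cup \{e\}$), so in these ranges $d(\pi'_{\le i}) \le k$ is inherited from $\pi$; moreover $\pi'_{\le |F|+1} = F \cup \{e\}$ with $d(F \cup \{e\}) \le d(F) \le k$. The remaining case is $|F|+2 \le i \le j-1$, where $\pi'_{\le i} = \pi_{\le i-1} \cup \{e\}$ and $i-1 \ge |F|$. This is the step I expect to be the crux, and it is exactly where submodularity is used: apply Lemma~\ref{lem:submodular} with $X = F$ and $Y = \pi_{\le i-1}$, which is legitimate because $F \subseteq \pi_{\le i-1}$ (as $i-1 \ge |F|$) and $e \notin \pi_{\le i-1}$ (as $e$ sits at position $j > i-1$ in $\pi$), to obtain
\[
    d(\pi_{\le i-1} \cup \{e\}) - d(\pi_{\le i-1}) \le d(F \cup \{e\}) - d(F) \le 0,
\]
whence $d(\pi'_{\le i}) = d(\pi_{\le i-1} \cup \{e\}) \le d(\pi_{\le i-1}) \le k$. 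Combining the three ranges shows $\pi'$ has width at most $k$, so $\sigma'$ is $k$-extendable, completing the proof. The one minor thing to keep an eye on is the degenerate case $j = |F|+1$, in which the third range is empty and $\pi' = \pi$, so there is nothing further to verify.
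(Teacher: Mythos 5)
Your proof is correct and follows essentially the same route as the paper: the backward direction is immediate from the definition, and the forward direction moves $e$ to position $|F|+1$ and bounds the new prefix values via the submodularity of $d$ applied with $X = F$ and $Y = \pi_{\le i-1}$. Your index bookkeeping is in fact slightly more careful than the paper's (which contains a couple of typos such as $d(\pi_{\le i}\cup\{e\})$ where $d(\pi_{\le i-1}\cup\{e\})$ is meant), but the argument is identical in substance.
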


\begin{proof}
    By the definition of $k$-extendability, if $\sigma'$ is $k$-extendable, then so is $\sigma$. Thus, we consider the converse direction.
    
    Let $\sigma$ be a $k$-extendable partial layout of $F$.
    Then, there is a layout $\pi$ of $G$ such that $\pi(i) = \sigma(i)$ for $1 \le i \le |F|$.
    Let $j = |F|$ and let $j'$ be such that $\pi(j') = e$.
    If $j + 1 = j'$, we are done.
    Thus we suppose $j + 1 < j'$.
    Let $\pi'$ be the layout of $G$ obtained from $\pi$ by moving $e$ forward to the $j+1$-th position, that is,
    \[
        \pi' = \pi(1) \cdots \pi(j)\ e\ \pi(j+1)\cdots \pi(j'-1)\ \pi(j'+1) \cdots \pi(m).
    \]
    To prove the $k$-extendability of $\sigma'$, we show that $d(\pi'_{\le i}) \le k$ for every $1 \le i \le m$.
    For each $1 \le i \le j$, $\pi(i) = \pi'(i)$ and for each $j' + 1 \le i \le m$, $\pi_{\le i} = \pi'_{\le i}$.
    Thus, we consider $j < i \le j'$.
    As $\pi'_{\le i} = \pi_{\le i-1} \cup \{e\}$, by the submodularity of $d$, we have
    \begin{align*}
            d(\pi'_{\le i}) - d(\pi_{\le i}) &= d(\pi_{\le i} \cup \{e\}) - d(\pi_{\le i}) \\
            & \le d(F \cup \{e\}) - d(F).
    \end{align*}
    Thus, $d(\pi'_{\le i}) \le d(F \cup \{e\}) - d(F) + d(\pi_{\le k})$, and as $d(F) \ge d(F \cup \{e\})$ and $d(\pi_{\le i}) \le k$, we have $d(\pi'_{\le i}) \le k$.
\end{proof}

This lemma is in fact a special case of the commitment lemma in \cite{Kitsunai:Computing:2016}, and used also for experimental speedup in pathwidth computation \cite{Kobayashi:Search:2014}.
Since in the proof of Lemma~\ref{lem:prune} the property of function $d$ needed is only the submodularity, the lemma also holds any measures on linear layouts satisfying the submodularity, such as pathwidth.

Let $F \subseteq E$.
The {\em closure} of $F$, denoted by $F^*$, is the set of edges contained in the subgraph of $G$ induced by $V(F)$.
By the definition of closure, we have $V(F \cup \{e\}) = V(F)$ for every $e \in F^* \setminus F$.
In particular, we have $\midset(F \cup \{e\}) \subseteq \midset(F)$.
Therefore, by Lemma~\ref{lem:prune}, it suffices to run our dynamic programming over all the closures, that is, $F \subseteq E$ with $F^* = F$.
The pseudocode is described in Algorithm~\ref{alg:main}.

\begin{algorithm}[ht]
\caption{Does $F$ have a $k$-extendable layout?}\label{alg:main}
\begin{algorithmic}[1]

\Procedure{extendable}{$k, F \subseteq E$}
    \If{$F = E$}
        \State \Return \bf{True}
    \EndIf
    \If{$V(F)$ is in the table}
        \State \Return \bf{False}
    \EndIf
    \For{$e \in E \setminus F$ with $d(F \cup \{e\}) \le k$}
        \If{{\sc isextendable}($k$, $(F \cup \{e\})^*$)}
            \State \Return \bf{True}
        \EndIf
    \EndFor
    \State Store $V(F)$ in the table.
    \State \Return \bf{False}
\EndProcedure

\end{algorithmic}
\end{algorithm}

The set of edges in the closure $F^*$ of $F$ is exactly the edges in the subgraph induced by $V(F)$.
Therefore, the number of distinct closures is at most $2^n$, which proves Theorem~\ref{thm:main}.

\bibliographystyle{plain}
\bibliography{main}

\end{document}